\newtheorem{theorem}{Theorem}
\newtheorem{lemma}{Lemma}
\newtheorem{definition}{Definition}
\newcommand{\cE}{\mathcal{E}}
\newcommand{\cH}{\mathcal{H}}
\newcommand{\cI}{\mathcal{I}}
\newcommand{\cU}{\mathcal{U}}
\begin{document}

\preprint{APS/123-QED}

\title{A short note on the 0-fidelity}

\author{Karl Mayer}

\affiliation{Honeywell Quantum Solutions}
\date{\today}

\begin{abstract}
A recent article introduced a hierarchy of quantities 
called $k$-fidelities that approximate the quantum process fidelity with increasing accuracy.
The lowest approximation in this hierarchy is the $0$-fidelity~\cite{greenaway2021}.
The authors gave a protocol for estimating the 0-fidelity
and showed numerical evidence that it approximates the process fidelity.
In this note, we prove lower and upper bounds on the process fidelity as linear functions of the 0-fidelity.
The proof is a simple modification of a proof in~\cite{Mayer2018}.
By solving a semidefinite program,
we provide evidence that the lower bound is tight.

\end{abstract}

\maketitle

\section{Introduction}\label{sec:level1}

Let $\cE$ be a quantum process, given by a completely-positive trace preserving (CPTP) map,
acting on the state space of $n$ qubits.
The 0-fidelity of $\cE$ was introduced in~\cite{greenaway2021}
and is defined in terms
of its action on tensor products of single-qubit symmetric positive operator-valued measure (SIC-POVM) states.
We recall that a set of 4 pure qubit states $\{\ket{\psi_i}\}_{i=1}^4$
is a called a single-qubit SIC-POVM if
\begin{equation}
    |\braket{\psi_i}{\psi_j}|^2=\frac{1}{3}\quad i\ne j .
\end{equation}
For an $n$-qubit system, with dimension $d=2^n$, consider the $d^2$ states
\begin{equation}
    \ket{\psi_{k_1\cdots k_n}} = \ket{\psi_{k_1}}\otimes\cdots\otimes\ket{\psi_{k_n}},
\end{equation}
for $k_i\in\{1,2,3,4\}$, and where for all $i$, the set of states
$\{\ket{\psi_{k_i}}\}_{k_i}$ is a single-qubit SIC-POVM.
We index these states with a multi-index $k=k_1\cdots k_n$.
\begin{definition}
    The $0$-fidelity of $\cE$, with respect to the
    identity, is given by
    \begin{equation}\label{eq: 0-fidelity}
        F_0(\cE)=\frac{1}{d^2}\sum_{k=1}^{d^2}\bra{\psi_k}\cE\big(\ket{\psi_k}\bra{\psi_k}\big)\ket{\psi_k},
    \end{equation}
    for states $\ket{\psi_k}=\ket{\psi_{k_i}}\otimes\cdots\otimes\ket{\psi_{k_n}}$ such that for all $i$, $\{\ket{\psi_{k_i}}\}_{k_i=1}^4$ is a single-qubit SIC-POVM.
\end{definition}

The $0$-fidelity does not depend on the choice of
single-qubit SIC-POVMs. Put another way, we have
$F_0(\cE)=\cU\circ\cE\circ\cU^{\dagger}$ for any
local unitary $\cU=\cU_1\otimes\cdots\otimes\,\cU_n$.
This follows from the facts that $F_0(\cE)$ is quadratic in all $\ket{\psi_{k_i}}\bra{\psi_{k_i}}$ and that SIC-POVMs are 2-designs~\cite{Renes2004}.

The process fidelity, also known as the entanglement fidelity~\cite{Nielsen2002}, is defined as follows.
Let $\{\ket{x}\}_{x=1}^{d}$ be an orthonormal basis for $\cH=\mathbb{C}^d$,
and let
\begin{equation}
    \ket{\phi} = \frac{1}{\sqrt{d}}\sum_{x=1}^{d^2}\ket{x}\otimes\ket{x}
\end{equation}
be a maximally entangled state on $\cH\otimes\cH$.
\begin{definition}
    The process fidelity of $\cE$, with respect to the identity, is given by
\begin{equation}
    F(\cE) = \bra{\phi}(\cI\otimes\cE)\big(\ket{\phi}\bra{\phi}\big)\ket{\phi},
\end{equation}
where $\cI$ denotes the identity process.
\end{definition}

The process fidelity turns out to be independent
of the choice of orthonormal basis $\{\ket{x}\}_x$.
The density operator $(\cI\otimes\cE)\big(\ket{\phi}\bra{\phi}\big)$ appearing in the above equation
is called the Choi state for the process $\cE$,
and is denoted $\chi_{\cE}$, or simply $\chi$ when
the process is understood.
While the stated definitions of the process and $0$-fidelities
are with respect to the identity,
we could also consider a target unitary $U$,
in which case the fidelities with respect to $U$ are defined as
$F(\cE,U)=F(\cU^{\dagger}\circ \cE)$ and $F_0(\cE,U)=F_0(\cU^{\dagger}\circ\cE)$.
The bounds given by our main result apply to any target unitary,
but without loss of generality we can take the fidelities
to be with respect to the identity.

In order to prove our main result,
we will need the following two facts.
The first is the expression for the output of
a process on a state in terms of its Choi matrix:
\begin{equation}\label{eq: E on rho}
    \cE(\rho) = d \Tr_1\big(\chi(\rho^{\intercal}\otimes I)\big),
\end{equation}
where the trace is over the first subsystem,
and $\intercal$ denotes transposition.
Next, for any operator $A$, the maximally entangled state satisfies
\begin{equation}\label{eq:A on phi}
    (I\otimes A)\ket{\phi} = (A^{\intercal}\otimes I)\ket{\phi}.
\end{equation}
These facts can be proven using the definitions.

\section{Process fidelity bounds}

Our main result is an upper and lower bound on the process fidelity 
as a linear function of the $0$-fidelity.

\begin{theorem}\label{theorem 1}
Let $\cE$ be an $n$-qubit process with $0$-fidelity $F_0$ and process fidelity $F$. Then
\begin{equation}
    1-\frac{3}{2}(1-F_0)\le F \le F_0.
\end{equation}
\end{theorem}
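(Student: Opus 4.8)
The plan is to express both fidelities as linear functionals of the Choi state $\chi$ and then compare the two fixed operators against which $\chi$ is traced. Since $F=\Tr[\chi\,\ket{\phi}\bra{\phi}]$ already has this form, the real work is to rewrite $F_0$ the same way.

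First I would apply the Choi representation~\eqref{eq: E on rho} to each summand of $F_0$. Writing $\rho_k=\ket{\psi_k}\bra{\psi_k}$ and using the adjoint relation $\Tr_2[\rho\,\Tr_1(X)]=\Tr[(I\otimes\rho)X]$ together with cyclicity, one gets $\bra{\psi_k}\cE(\rho_k)\ket{\psi_k}=d\,\Tr[\chi\,(\rho_k^{\intercal}\otimes\rho_k)]$, so that
\begin{equation}
F_0=\frac{1}{d}\,\Tr\!\Big[\chi\,M\Big],\qquad M:=\sum_k \rho_k^{\intercal}\otimes\rho_k .
\end{equation}
The operator $M$ is independent of $\cE$ and factorizes over the $n$ qubits. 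To evaluate it I would invoke the 2-design property~\cite{Renes2004}: for a single-qubit SIC-POVM, $\sum_{k_i}\rho_{k_i}\otimes\rho_{k_i}=\tfrac{2}{3}(I+W)$ with $W$ the qubit swap. Taking the partial transpose on the first factor leaves $I$ fixed and sends $W$ to twice the single-pair maximally entangled projector (the transpose trick~\eqref{eq:A on phi}), giving $\sum_{k_i}\rho_{k_i}^{\intercal}\otimes\rho_{k_i}=\tfrac{2}{3}(I+2P_i)$, where $P_i$ projects onto the maximally entangled state of reference qubit $i$ with output qubit $i$. Collecting the prefactor $\tfrac{1}{d}(2/3)^n=3^{-n}$ and using $I+2P_i=3P_i+Q_i$ with $Q_i=I-P_i$, I obtain
\begin{equation}
F_0=\frac{1}{3^{n}}\,\Tr\!\Big[\chi\bigotimes_{i=1}^{n}(3P_i+Q_i)\Big]=\frac{1}{3^{n}}\sum_{S\subseteq\{1,\dots,n\}}3^{|S|}\,t_S ,
\end{equation}
where $t_S:=\Tr[\chi\,R_S]$ and $R_S:=\bigotimes_{i\in S}P_i\otimes\bigotimes_{j\notin S}Q_j$.

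With this decomposition the bounds reduce to bookkeeping. I would record two facts: since $\sum_S R_S=I$ and $\Tr\chi=1$ we have $\sum_S t_S=1$, and since $\bigotimes_i P_i=\ket{\phi}\bra{\phi}$ we have $t_{\{1,\dots,n\}}=F$; moreover every $t_S\ge0$ because $\chi$ and each $R_S$ are positive. The upper bound is then immediate: the full-index term of $F_0$ equals $F$ and all remaining terms are nonnegative, so $F\le F_0$. For the lower bound I would form the combination $3F_0-2F=\sum_S c_S\,t_S$ and check that every coefficient obeys $c_S\le1$ (the full index gives $3-2=1$, the $(n-1)$-element subsets give $1$, and all smaller subsets give at most $1/3$). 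Since the $t_S$ are nonnegative and sum to $1$, this yields $3F_0-2F\le1$, which rearranges to $F\ge 1-\tfrac{3}{2}(1-F_0)$.

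The step I expect to be the main obstacle is the evaluation of $M$: one must correctly track the reordering that pairs reference qubit $i$ with output qubit $i$ so that $\bigotimes_i P_i$ is genuinely the global projector $\ket{\phi}\bra{\phi}$, and verify that the partial transpose of the swap is exactly the (unnormalized) maximally entangled projector. Once $M$ is in this factorized form, positivity of $\chi$ and of each $R_S$, together with the elementary comparison of the coefficients $c_S$, makes both inequalities fall out with no further estimation.
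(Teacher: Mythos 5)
Your proof is correct, and while it starts the same way as the paper---rewriting $F_0=\frac{1}{d}\Tr(\chi\,M)$ with $M=\sum_k\rho_k^{\intercal}\otimes\rho_k$ (the paper's operator $\Gamma$)---it handles the key spectral step differently. The paper obtains $\mathrm{spec}(\Gamma)$ by writing $\Gamma=\gamma\gamma^{\dagger}$ and reducing to the Gram matrix $(M^{(1)})^{\otimes n}$ of SIC-POVM overlaps, identifies $\ket{\phi}$ as the top eigenvector in a separate lemma, and then deduces both inequalities from positivity of the operators $\ket{\phi}\bra{\phi}-\frac{3}{2d}\Gamma+\frac12 I$ and $\frac1d\Gamma-\ket{\phi}\bra{\phi}$. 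You instead use the 2-design identity $\sum_{k_i}\rho_{k_i}\otimes\rho_{k_i}=\frac23(I+W)$ and the partial transpose of the swap to produce the full explicit spectral decomposition $\frac1d\Gamma=3^{-n}\bigotimes_i(3P_i+Q_i)$, after which both bounds follow from comparing coefficients $3^{|S|-n}$ against the probabilities $t_S$. Your route buys more: it exhibits all eigenvalues and eigenprojectors of $\Gamma$ at once (recovering the paper's Lemmas 2 and 3 as the $|S|=n$ and $|S|=n-1$ terms), and it makes transparent why the lower bound is governed by the $|S|=n-1$ sector; the coefficient comparison $c_S\le1$ is exactly the paper's operator inequality $A\ge0$ read in the eigenbasis. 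The paper's Gram-matrix trick is a bit leaner if one only wants the two largest eigenvalues. The two points you flag as delicate---the qubit reordering that makes $\bigotimes_iP_i=\ket{\phi}\bra{\phi}$, and $W^{T_1}$ being twice the maximally entangled projector in the same basis used for the transpose in Eq.~\eqref{eq: E on rho}---do check out under the paper's conventions, so there is no gap.
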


We will prove Th.~\ref{theorem 1} by a series of lemmas.
Define the operator $\Gamma$ by
\begin{equation}
    \Gamma = \sum_k \big(\ket{\psi_k}\bra{\psi_k}\big)^{\intercal}\otimes \ket{\psi_k}\bra{\psi_k}.
\end{equation}
By Eqs.~\eqref{eq: 0-fidelity} and~\eqref{eq: E on rho},
the $0$-fidelity can be rewritten as
\begin{equation}\label{eq: 0-fid versus Choi}
    F_0 = \frac{1}{d}\Tr(\chi\Gamma).
\end{equation}
Our first lemma gives a sufficient condition for Th.~\ref{theorem 1} to hold.
\begin{lemma}
    Let $A=\ket{\phi}\bra{\phi}-\frac{3}{2d}\Gamma+\frac{1}{2}I$ and $B=\frac{1}{d}\Gamma-\ket{\phi}\bra{\phi}$.
    Then the left and right inequalities in Th.~\ref{theorem 1} are implied by $A\ge0$ and $B\ge0$.
\end{lemma}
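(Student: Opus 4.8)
The plan is to exploit the single structural fact that makes the bounds linear: the Choi state $\chi$ is a genuine density operator, so $\chi\ge 0$ and $\Tr(\chi)=1$. Combined with the elementary observation that the Hilbert--Schmidt pairing of two positive semidefinite operators is nonnegative, each of the two desired inequalities should follow by pairing $\chi$ against one of the designated operators. The arithmetic backbone is that $F=\Tr\big(\chi\ket{\phi}\bra{\phi}\big)$ and, by Eq.~\eqref{eq: 0-fid versus Choi}, $F_0=\tfrac{1}{d}\Tr(\chi\Gamma)$, so both $A$ and $B$ are built precisely from the operators $\ket{\phi}\bra{\phi}$, $\tfrac{1}{d}\Gamma$, and $I$ whose $\chi$-pairings reproduce $F$, $F_0$, and $1$.

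First I would record the key inequality: for any positive semidefinite $M$ we have $\Tr(\chi M)=\Tr\big(\chi^{1/2}M\chi^{1/2}\big)\ge 0$, since the argument is itself positive semidefinite. I would then apply this with $M=B$ and expand, obtaining
\begin{equation}
    \Tr(\chi B)=\frac{1}{d}\Tr(\chi\Gamma)-\Tr\big(\chi\ket{\phi}\bra{\phi}\big)=F_0-F .
\end{equation}
Hence $B\ge 0$ forces $F_0-F\ge 0$, which is exactly the right-hand inequality $F\le F_0$.

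Next I would run the same argument with $M=A$. Using the normalization $\Tr(\chi)=1$ to evaluate the $\tfrac{1}{2}I$ term, the expansion gives
\begin{equation}
    \Tr(\chi A)=F-\frac{3}{2}F_0+\frac{1}{2} ,
\end{equation}
so $A\ge 0$ forces $F\ge\tfrac{3}{2}F_0-\tfrac{1}{2}$. Rewriting the bound as $1-\tfrac{3}{2}(1-F_0)$ recovers the left-hand inequality, which completes the reduction.

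The only point requiring care in this lemma is the constant- and factor-of-$d$ bookkeeping in the two expansions, and making explicit that $\chi$ is trace-normalized so the identity term contributes exactly $\tfrac{1}{2}$. There is no genuine obstacle here: the lemma does nothing more than convert Theorem~\ref{theorem 1} into the two operator positivity statements $A\ge 0$ and $B\ge 0$. The real work is deferred to the later lemmas, where I expect the positivity of $A$ and $B$ to be established from the explicit structure of $\Gamma$ and the fact that single-qubit SIC-POVMs form a 2-design.
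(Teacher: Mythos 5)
Your proposal is correct and follows essentially the same route as the paper: pair $\chi$ (positive semidefinite, trace one) against $A$ and $B$, expand $\Tr(\chi A)=F-\tfrac{3}{2}F_0+\tfrac{1}{2}$ and $\Tr(\chi B)=F_0-F$, and read off the two inequalities. The arithmetic, including the identification $\tfrac{3}{2}F_0-\tfrac{1}{2}=1-\tfrac{3}{2}(1-F_0)$, checks out.
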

\begin{proof}
If $A\ge0$, then $\Tr(\chi A)\ge0$, since $\chi\ge0$.
We compute
\begin{align}
    \Tr(\chi A) &= \bra{\phi}\chi\ket{\phi}-\frac{3}{2d}\Tr(\chi \Gamma) + \frac{1}{2}\Tr(\chi)\notag\\
    &= F - \frac{3}{2} F_0 + \frac{1}{2}\notag\\
    &= F - \big(1 - \frac{3}{2}(1-F_0)\big).
\end{align}
Therefore, $A\ge0$ implies $F\ge1-\frac{3}{2}(1-F_0)$.
Similarly,
\begin{align}
    \Tr(\chi B) &= \frac{1}{d}\Tr(\chi\Gamma)-\bra{\phi}\chi\ket{\phi}\notag\\
    &= F_0-F,
\end{align}
and therefore, $B\ge0$ implies $F\le F_0$.
\end{proof}
The next lemma concerns $\mathrm{spec}(\Gamma)$, the spectrum of $\Gamma$.
\begin{lemma}
The largest eigenvalue of $\Gamma$ is $d$,
and the next largest eigenvalue is $d/3$.
These eigenvalues occur with multiplicities of $1$ and $3n$, respectively.
\end{lemma}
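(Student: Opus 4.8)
The plan is to exploit the tensor-product structure of $\Gamma$ and reduce the spectral problem to a single qubit. Since each $\ket{\psi_k}$ is a product state $\ket{\psi_{k_1}}\otimes\cdots\otimes\ket{\psi_{k_n}}$ and the multi-index $k$ ranges over all tuples $(k_1,\dots,k_n)$ independently, the sum defining $\Gamma$ factorizes. After reordering the $2n$ tensor factors so that the input and output copies of qubit $i$ sit adjacent (a unitary relabeling, i.e.\ a permutation of tensor factors, which preserves the spectrum with multiplicities), $\Gamma$ becomes $\gamma^{\otimes n}$, where
\begin{equation}
    \gamma = \sum_{j=1}^{4}\big(\ket{\psi_j}\bra{\psi_j}\big)^{\intercal}\otimes\ket{\psi_j}\bra{\psi_j}
\end{equation}
is the single-qubit version acting on $\mathbb{C}^2\otimes\mathbb{C}^2$. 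The eigenvalues of $\Gamma$ are therefore all products of $n$ eigenvalues of $\gamma$, so it suffices to diagonalize the $4\times4$ matrix $\gamma$.

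To compute the spectrum of $\gamma$, I would expand each projector in the Pauli basis, $\ket{\psi_j}\bra{\psi_j}=\tfrac12(I+\bm{n}_j\cdot\bm{\sigma})$, where the four Bloch vectors $\bm{n}_j$ form a regular tetrahedron. Because single-qubit SIC-POVM states form a spherical $2$-design, we have $\sum_j\bm{n}_j=0$ and $\sum_j(n_j)_a(n_j)_b=\tfrac{4}{3}\delta_{ab}$. Transposition acts as $\sigma_x^{\intercal}=\sigma_x$, $\sigma_y^{\intercal}=-\sigma_y$, $\sigma_z^{\intercal}=\sigma_z$, hence sends $\bm n_j\mapsto(n_j^x,-n_j^y,n_j^z)$; carrying this sign through the expansion and applying the design identities (the linear terms vanish) collapses the sum to
\begin{equation}
    \gamma = I + \tfrac13\big(\sigma_x\otimes\sigma_x-\sigma_y\otimes\sigma_y+\sigma_z\otimes\sigma_z\big).
\end{equation}
Diagonalizing in the Bell basis, the operator $\sigma_x\otimes\sigma_x-\sigma_y\otimes\sigma_y+\sigma_z\otimes\sigma_z$ has eigenvalue $3$ on $\tfrac{1}{\sqrt2}(\ket{00}+\ket{11})$ and eigenvalue $-1$ on its orthogonal complement. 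Hence $\gamma$ has top eigenvalue $2$ with multiplicity $1$ (eigenvector the single-qubit maximally entangled state) and eigenvalue $2/3$ with multiplicity $3$.

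Finally I would assemble the spectrum of $\Gamma$ from products of these single-qubit eigenvalues. Taking the factor $2$ on every qubit gives the largest eigenvalue $2^n=d$, occurring uniquely (multiplicity $1^n=1$); the corresponding eigenvector is $\ket{\phi}$, consistent with $\tfrac1d\Gamma\ket{\phi}=\ket{\phi}$. Replacing the factor $2$ by $2/3$ on exactly one qubit yields $2^{n-1}\cdot\tfrac23=\tfrac{d}{3}$, and since that qubit's $2/3$-eigenspace is three-dimensional and there are $n$ choices of qubit, this value occurs with multiplicity $3n$. Any other choice replaces at least two factors and gives at most $d/9$, so $d/3$ is indeed the second-largest eigenvalue. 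The one delicate point is the transpose: it is precisely the sign flip $\sigma_y\to-\sigma_y$ that produces $\sigma_x\otimes\sigma_x-\sigma_y\otimes\sigma_y+\sigma_z\otimes\sigma_z$ rather than $\sigma_x\otimes\sigma_x+\sigma_y\otimes\sigma_y+\sigma_z\otimes\sigma_z$, and this distinction entirely determines the spectrum, so that is where I would take the most care.
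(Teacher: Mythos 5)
Your proof is correct, but it takes a genuinely different route from the paper's. The paper never tensor-factorizes $\Gamma$ itself: it writes $\Gamma=\gamma\gamma^{\dagger}$ for a rectangular $\gamma=\sum_k\ket{\psi_k^*}\bra{k}\otimes\ket{\psi_k}\bra{k}$, uses $\mathrm{spec}(\gamma\gamma^{\dagger})=\mathrm{spec}(\gamma^{\dagger}\gamma)$ to pass to the Gram matrix $M_{kk'}=|\braket{\psi_k}{\psi_{k'}}|^2$, and observes that $M={M^{(1)}}^{\otimes n}$ with $M^{(1)}$ the $4\times4$ matrix having $1$ on the diagonal and $1/3$ off it, whose spectrum is $\{2,2/3,2/3,2/3\}$. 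You instead recognize $\Gamma$ as (a permutation conjugate of) $\gamma^{\otimes n}$ for the single-qubit operator $\gamma=\sum_j(\ket{\psi_j}\bra{\psi_j})^{\intercal}\otimes\ket{\psi_j}\bra{\psi_j}$ and diagonalize $\gamma$ directly via the Bloch/2-design identities, landing on the same single-factor spectrum $\{2,2/3,2/3,2/3\}$; the final assembly into $d$ with multiplicity $1$ and $d/3$ with multiplicity $3n$ is identical. The paper's Gram-matrix route needs only the overlap value $|\braket{\psi_i}{\psi_j}|^2=1/3$ and avoids any Pauli bookkeeping, but it only locates eigenvalues, not eigenvectors (hence the paper's separate Lemma~3 showing $\Gamma\ket{\phi}=d\ket{\phi}$). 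Your computation is heavier--and the sign flip $\sigma_y\mapsto-\sigma_y$ under transposition is indeed the one place an error would be fatal, since without it $\gamma$ would have its top eigenvector on the singlet rather than on $\frac{1}{\sqrt2}(\ket{00}+\ket{11})$--but it buys the explicit eigenvectors: the top eigenspace is spanned by the tensor product of per-qubit maximally entangled states, i.e.\ $\ket{\phi}$ itself, so your argument subsumes Lemma~3 as a corollary and yields the full spectrum $\{d\,3^{-m}\}$ with binomial-type multiplicities. Both proofs are complete and correct.
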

\begin{proof}
By defining
\begin{equation}
    \gamma = \sum_k\ket{\psi_k^*}\bra{k}\otimes\ket{\psi_k}\bra{k},
\end{equation}
where $^*$ denotes complex conjugation,
we see that $\Gamma=\gamma\gamma^{\dagger}$,
and therefore $\mathrm{spec}(\Gamma)=\mathrm{spec}(\gamma\gamma^{\dagger})=\mathrm{spec}(\gamma^{\dagger}\gamma)$.
Since
\begin{equation}
    \gamma^{\dagger}\gamma = \sum_{kk'}|\braket{\psi_k}{\psi_{k'}}|^2\ket{k}\bra{k'}\otimes\ket{k}\bra{k'},
\end{equation}
the non-zero eigenvalues of $\Gamma$ are equal
to those of the $d^2$-by-$d^2$ matrix $M$ with entries
\begin{align}
    M_{kk'} &=
     |\braket{\psi_k}{\psi_{k'}}|^2\notag\\
    &= \sum_{k_1k_1'=1}^4\cdots\sum_{k_nk_n'=1}^4 |\braket{\psi_{k_1}}{\psi_{k_1'}}|^2\cdots |\braket{\psi_{k_n}}{\psi_{k_n'}}|^2\notag\\
    &= \prod_{i=1}^n \bigg(\sum_{k_ik_i'=1}^4|\braket{\psi_{k_i}}{\psi_{k_i'}}|^2\bigg).
\end{align}
The  last line implies $M={M^{(1)}}^{\otimes n}$,
where $M^{(1)}$ is the matrix of overlaps of single-qubit SIC-POVM states.
\begin{equation}
    M^{(1)}=\begin{pmatrix}
    1 & 1/3 & 1/3 & 1/3\\
    1/3 & 1 & 1/3 & 1/3\\
    1/3 & 1/3 & 1 & 1/3\\
    1/3 & 1/3 & 1/3 & 1
    \end{pmatrix}
\end{equation}
The eigenvalues of $M^{(1)}$ are $2$ and $2/3$,
occurring with multiplicities $1$ and $3$, respectively.
It follows that the largest eigenvalue of $\Gamma$ is $2^n=d$,
and the next largest eigenvalue is $2^{n-1}(2/3)=d/3$, with multiplicity $3n$.
\end{proof}

\begin{lemma}
$\ket{\phi}$ is the eigenvector of $\Gamma$ with eigenvalue $d$.
\end{lemma}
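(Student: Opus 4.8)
The plan is to verify directly that $\Gamma\ket{\phi}=d\ket{\phi}$. Combined with the previous lemma, which shows that the eigenvalue $d$ of $\Gamma$ is nondegenerate, this identifies $\ket{\phi}$ as the unique top eigenvector and completes the claim. The only tools needed are Eq.~\eqref{eq:A on phi} and the fact that each $\ket{\psi_k}\bra{\psi_k}$ is a rank-one projector.

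First I would work one term at a time. Writing $P_k=\ket{\psi_k}\bra{\psi_k}$, so that $\Gamma=\sum_k P_k^{\intercal}\otimes P_k$, I would act on $\ket{\phi}$ and move the second tensor factor onto the first using Eq.~\eqref{eq:A on phi}:
\begin{align}
    (P_k^{\intercal}\otimes P_k)\ket{\phi}
    &= (P_k^{\intercal}\otimes I)(I\otimes P_k)\ket{\phi}\notag\\
    &= (P_k^{\intercal}\otimes I)(P_k^{\intercal}\otimes I)\ket{\phi}\notag\\
    &= \big((P_k^{\intercal})^2\otimes I\big)\ket{\phi}.
\end{align}
Since $P_k$ is a projector, $(P_k^{\intercal})^2=(P_k^2)^{\intercal}=P_k^{\intercal}$, so the right-hand side collapses to $(P_k^{\intercal}\otimes I)\ket{\phi}$, which by Eq.~\eqref{eq:A on phi} equals $(I\otimes P_k)\ket{\phi}$.

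Summing over $k$ then gives $\Gamma\ket{\phi}=\big(I\otimes\sum_k P_k\big)\ket{\phi}$, and the final step is to evaluate $\sum_k P_k$. Because the $\ket{\psi_k}$ are tensor products across the $n$ qubits, this operator factorizes qubit by qubit, and on a single qubit the four SIC-POVM states form a tight frame (the same 2-design property already used for the basis-independence of $F_0$), so $\sum_{k_i}\ket{\psi_{k_i}}\bra{\psi_{k_i}}=2I_2$. Taking the tensor product over all $n$ qubits yields $\sum_k P_k=2^n I=d\,I$, and hence $\Gamma\ket{\phi}=d\ket{\phi}$.

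I expect no real obstacle in this lemma. The one step meriting care is the transpose manipulation in the display above, whose collapse relies crucially on idempotency of the projectors $P_k$; without it the chain of equalities would not close. Everything else is a direct application of the two facts recalled in the introduction.
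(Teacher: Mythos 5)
Your proof is correct and follows essentially the same route as the paper's: apply Eq.~\eqref{eq:A on phi} to move the second factor onto the first, use idempotency of the projectors, and finish with the SIC-POVM resolution $\sum_{k_i}\ket{\psi_{k_i}}\bra{\psi_{k_i}}=2I$ on each qubit. You are in fact slightly more explicit than the paper about the idempotency step that justifies its second line, which is a welcome clarification rather than a deviation.
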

\begin{proof}
Indeed,
\begin{align}
    \Gamma\ket{\phi} &= \sum_k\big(\ket{\psi_k}\bra{\psi_k}\big)^{\intercal}\otimes\big(\ket{\psi_k}\bra{\psi_k}\big)\ket{\phi}\notag\\
    &= \sum_k\big(\ket{\psi_k}\bra{\psi_k}\big)^{\intercal}\otimes I\ket{\phi}\notag\\
    &= \sum_{k_1\dots k_n}\bigotimes_{i=1}^n\big(\ket{\psi_{k_i}}\bra{\psi_{k_i}}\big)^{\intercal}\otimes I \ket{\phi}\notag\\
    &= \bigotimes_{i=1}^n\bigg( \sum_{k_i}\big(\ket{\psi_{k_i}}\bra{\psi_{k_i}}\big)^{\intercal}\bigg)\otimes I\ket{\phi}\notag\\
    &= \bigotimes_{i=1}^n\bigg(\sum_{k_i}\ket{\psi_{k_i}}\bra{\psi_{k_i}}\bigg)^{\intercal}\otimes I\ket{\phi}\notag\\
    &= 2^n\ket{\phi},
\end{align}
where in the 2nd line we used Eq.~\eqref{eq:A on phi}
and in the last line we used the fact that 
single-qubit SIC-POVMs satisfy $\sum_{k_i}\ket{\psi_{k_i}}\bra{\psi_{k_i}}=2I$.
\end{proof}

\begin{proof}[Proof of Theorem~\ref{theorem 1}]

We first prove that the operator $B$
in Lemma~1 satisfies $B\ge0$. By definition $\Gamma\ge0$, and
by Lemma~3, $\Gamma\ket{\phi}=d\ket{\phi}$.
It follows that $B=\frac{1}{d}\Gamma-\ket{\phi}\bra{\phi}\ge0$.

We next show that $A\ge0$.
Decompose $\Gamma=d\ket{\phi}\bra{\phi} + \Gamma_{\perp}$,
where $\Gamma_{\perp}$ is the component of $\Gamma$ that
is orthogonal to $\ket{\phi}\bra{\phi}$.
Similarly, decompose the identity as $I=\ket{\phi}\bra{\phi}+I_{\perp}$.
The operator $A$ is then given by
\begin{align}
    A&=\ket{\phi}\bra{\phi}-\frac{3}{2d}\Gamma+\frac{1}{2}I\notag\\
    &= \ket{\phi}\bra{\phi} - \frac{3}{2}\ket{\phi}\bra{\phi} -\frac{3}{2d}\Gamma_{\perp} + \frac{1}{2}\ket{\phi}\bra{\phi} + \frac{1}{2}I_{\perp}\notag\\
    &= -\frac{3}{2d}\Gamma_{\perp} + \frac{1}{2}I_{\perp}.
\end{align}
By Lemma~2, the largest eigenvalue of $\Gamma_{\perp}$ is $d/3$. 
We therefore have
\begin{equation}
    A=\frac{1}{2}\big(I_{\perp} - \frac{3}{d}\Gamma_{\perp}\big)\ge0,
\end{equation}
which completes the proof.

\end{proof}

\section{Tight bounds via semidefinite programming}

We next provide evidence that the lower bound given
by Th.~1 is tight.
The minimum and maximum of $F(\cE)$
over the set of processes having a given $0$-fidelity
can be found via semidefinite programming~\cite{Boyd2004}.
The CP and TP constraints on $\cE$ translate to 
the constraints that $\chi\ge 0$ and $\Tr_2(\chi)=I/d$, respectively~\cite{Audenaert2002}.
Using Eq.~\eqref{eq: 0-fid versus Choi} for
the constraint on $F_0(\cE)$,
the minimum is found by solving
\begin{align}
    \mathrm{Minimize}:\quad&\Tr(\chi\ket{\phi}\bra{\phi}),\notag\\
    \mathrm{Subject}\,\mathrm{to} :\quad&\Tr(\chi\Gamma)=dF_0,\notag\\
    &\Tr_2(\chi)=I/d,\notag\\
    &\,\chi\ge 0.
\end{align}
We solve this SDP using the cvx package~\cite{cvx},
and find that the solution is $1-\frac{3}{2}(1-F_0)$, independent of $n$ for $n\le4$.
This leads us to conjecture that the lower bound in Th.~1 is tight for all $n$.

\begin{figure}[ht]
\centering
\includegraphics[scale=0.38]{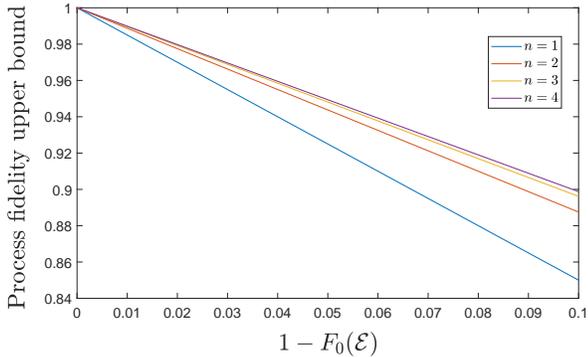}
\centering
 \caption{SDP solutions to~\eqref{eq: SDP upper bound} giving tight upper bounds
 on $F(\cE)$ as a function of $1-F_0(\cE)$ and
 qubit number $n$.}
 \label{fig: upper bounds}
\end{figure}

Similarly, a tight upper bound on $F(\cE)$
is given by solving
\begin{align}\label{eq: SDP upper bound}
    \mathrm{Maximize}:\quad&\Tr(\chi\ket{\phi}\bra{\phi}),\notag\\
    \mathrm{Subject}\,\mathrm{to} :\quad&\Tr(\chi\Gamma)=dF_0,\notag\\
    &\Tr_2(\chi)=I/d,\notag\\
    &\,\chi\ge 0.
\end{align}
In this case, the solutions are strictly less than $F_0(\cE)$,
implying that the upper bound of $F(\cE)\le F_0(\cE)$ in Th.~1
is not tight, but rather becomes an increasingly better approximation as $n$ increases.
These results are shown in Fig~\ref{fig: upper bounds}.
For $n=1$, $\{\ket{\psi_k}\}_k$ is a 2-design and therefore $F_0(\cE)=F_{\mathrm{avg}}(\cE)$, the average fidelity of $\cE$,
given by~\cite{Nielsen2002}
\begin{equation}
    F_{\mathrm{avg}}(\cE)=\frac{d F(\cE)+1}{d+1}.
\end{equation}
Therefore, for $n=1$, the process fidelity is uniquely determined by the $0$-fidelity,
and is equal to the lower bound.
For $n>1$, the tight upper bound to the process fidelity converges to $F_0(\cE)$ as $n$ increases.

\section{discussion}

We have proven a lower bound on the process 
fidelity in terms of the $0$-fidelity,
namely $1-\frac{3}{2}(1-F_0)\le F$,
and by numerically solving an SDP for small $n$ conjecture that this bound is tight.
The authors of~\cite{greenaway2021} numerically
calculated $F_0(\cE)$ and $F(\cE)$ for an ensemble
of random processes and concluded that $F_0(\cE)$ converges
to $F(\cE)$ with increasing $n$. While true for the ensemble of
processes used in that simulation,
if our conjecture holds, then
the worst case process fidelity does not converge to the
$0$-fidelity with increasing $n$. A description of the worst-case process is an open problem.

We have also computed the best-case process fidelity
as the solution to an SDP.
This solution does converge to the $0$-fidelity with
increasing $n$, qualitatively matching the results from~\cite{greenaway2021},
and improves upon the protocol in that work for estimating the process fidelity.
Specifically, given an estimate of the $0$-fidelity,
the process fidelity may be estimated as the upper bound
given by the SDP solution, rather than by the $0$-fidelity itself.

\acknowledgements{The author thanks Sean Greenaway for feedback on this manuscript while in preparation.}

\nocite{*}

\bibliography{library}

\end{document}